\theoremstyle{definition} \newtheorem{theorem}{Theorem}[section]
\theoremstyle{definition} \newtheorem{definition}[theorem]{Definition}
\theoremstyle{definition} 
\theoremstyle{definition} \newtheorem{proposition}[theorem]{Proposition}
\theoremstyle{definition} 
\theoremstyle{definition} 
\theoremstyle{definition} 
\theoremstyle{definition} 
\theoremstyle{definition} 
\theoremstyle{definition} 
\begin{document}

\title{Fairness in Combinatorial Auctioning Systems}
\date{}
\author{\begin{tabular}[t]{c@{\extracolsep{2em}}c}
    Megha Saini \hspace{1in} Shrisha Rao\footnote{Corresponding author.}  \\
    {\tt \{megha.saini,srao\}@iiitb.ac.in} \\
    International Institute of Information Technology - Bangalore \\
    Bangalore 560 100 \\ India
\end{tabular}}
\maketitle

\begin{abstract}
  One of the Multi-Agent Systems that is widely used by various
  government agencies, buyers and sellers in a market economy, in such
  a manner so as to attain optimized resource allocation, is the
  Combinatorial Auctioning System (CAS).  We study another important
  aspect of resource allocations in CAS, namely \emph{fairness}.  We
  present two important notions of fairness in CAS, \emph{extended
    fairness} and \emph{basic fairness}.  We give an algorithm that
  works by incorporating a metric to ensure fairness in a CAS that
  uses the Vickrey-Clark-Groves (VCG) mechanism, and uses an algorithm
  of Sandholm to achieve optimality.  Mathematical formulations are
  given to represent measures of extended fairness and basic
  fairness.
\end{abstract}

{\bf Keywords:} fairness, optimality, multi-agent systems, combinatorial
auctions

\section{Introduction}

Multi-Agent Systems (MAS) have been an interesting topic in the areas
of decision theory and game theory.  MAS are composed of a number of
autonomous agents.  In some applications, these autonomous agents act
in a self-interested manner in their dealings with numerous other
agents.  Even in game theory, in an interactive framework the decision
of one agent often affects that of another.  This behavior is seen in
the MAS which mainly deal with issues like resource
allocation~\cite{bredin00gametheoretic,sycara98}.  In such scenarios,
each agent holds different preferences over the various possible
allocations and hence, concepts like individual rationality, fairness,
optimality, efficiency, etc., are important~\cite{mara-survey}.  In
this paper, we study a framework where optimality is a desirable
property but fairness is a required property.  An excellent example of
such a framework is Combinatorial Auctioning Systems (CAS) where the
two most important issues pertaining to resource allocation are
\emph{optimality} and \emph{fairness}.

Incorporation of fairness into game theory and economics is a
significant issue.  Its welfare implications in different systems were
explored by Rabin~\cite{rabin93}.  The problem of fair allocation is
being resolved in various MAS by using different procedures depending
upon the technique of allocation of goods and the nature of goods.
Brams and Taylor give the analysis of procedures for dividing
divisible and indivisible items and resolving disputes among the
self-interested agents~\cite{brams96}.  Some of the procedures
described by them include the ``Divide and Choose'' method of
allocation of divisible goods among two agents to ensure the fair
allocation of goods which also exhibits the property of
``envy-freeness,'' a property first introduced by
Foley~\cite{foley67}.  Lucas' method of markers and Knaster's method
of sealed bids are described for MAS comprising more than two players
and for the division of indivisible items.  The Adjusted-Winner (AW)
procedure is also defined by Brams~\cite{brams05} for envy-freeness
and equitability in two-agent systems.  Various other procedures like
moving knife procedures for cake cutting are defined for the MAS
comprising three or more agents~\cite{brams05, barbane04}.

However, it can also be seen that the definition of fairness varies
across the different multi-agent systems, i.e., the term
\emph{fairness} is perceived differently in various MAS with regard to
the resource allocation.  In some MAS, it can be defined as equitable
distribution of resources such that each recipient believes that it
receives its fair share.  Thus, each agent likes its share at least as
much as that of other agents' share and, thereby, it is also known as
envy-free division of resources~\cite{brams05}.  But this definition
of fairness is not applicable to all the MAS.  To explain the notions
of fairness in MAS, we classify fairness into \emph{extended fairness}
and \emph{basic fairness} in this paper.

To illustrate these notions of fairness mathematically, we shall use
the framework of the Combinatorial Auctioning Systems (CAS).  The CAS
is a kind of MAS whereby the bidders can express preferences over
combination of items~\cite{nisan00,narahari05}.  The CAS approach is
being used by different government agencies like the
FCC~\cite{cramton05} and numerous business applications like logistics
and transportation~\cite{caplice03, caplice05}, supply chain
formation~\cite{walsh00}, B2B negotiations~\cite{jones00}, etc.  It
has been noticed that one of the significant issues in CAS is that of
resource allocation.  Optimum resource allocation is one of the most
desirable properties in a CAS, and deals mainly with the Winner
Determination Problem (WDP)~\cite{sandholm02, naramunchi05}.
Determining the winner in a CAS so as to maximize revenue is an
NP-complete problem.  However, it is seen that besides WDP, fairness
is another important objective in many CAS-like government auctions.
Rothkopf expressed his view in~\cite{rothkopf01} that ``optimal
solution to the winner determination problem, while desirable, is not
required.  What is required is a guarantee that the auction will be
fair and will be perceived as fair.''  Hence, we realize the
significance of fairness in CAS.

We shall consider a CAS that uses the Sandholm algorithm and the
concept of a Generalized Vickrey Auction (GVA)~\cite{narahari05}.
Sandholm's algorithm is a method to determine the optimal allocation
of resources~\cite{sandholm02} in a CAS.  The concept of single-round
second-price sealed-bid auction is then used to determine the payment
made by the winners.  According to this, the payment made by a winner
is determined by the second-highest bid.  In order to achieve fairness
in such a CAS, we extend this existing payment scheme and take into
consideration the fair values of resources as perceived by the bidders
and the auctioneer in the system.  Based upon their estimate of fair
values, payments are made by the winners.  A detailed analysis is done
to highlight some important properties exhibited by this extension of
the payment scheme.

We start by classifying fairness and explain its different notions in
Section~\ref{sec_fairness}. It is followed by our study on CAS in
Section~\ref{sec_cas} and mathematical formulations are given that are
used to extend the payment scheme to achieve fairness in CAS.
Section~\ref{sec_analysis} gives a detail analysis of the scheme that
highlights the attractive properties in our payment scheme.  We
conclude with Section~\ref{sec_conclusion} which offers some
conclusions about our efforts, and some suggestions for further work
along these lines.

\section{Classification of Fairness} \label{sec_fairness}

To explain the different notions of fairness in various MAS, we
classify fairness as \emph{Basic Fairness} and \emph{Extended
  Fairness}.  This section defines the various perceptions about
measuring fairness in MAS.

In our analysis, we do not consider agent preferences as being apart
from their bids, i.e., if an agent has a higher preference for
something, it is considered to indicate the same by a higher bid, and
vice versa.  All goods are considered divisible.

Our algorithm given in Section~\ref{fairness_algo} creates an
allocation that is seen as having fairness (either basic or extended)
by all agents in the system.

\subsection{Basic Fairness}

In many MAS, there occurs a need of allocating the resources in an
equitable manner, i.e., each agent gets an equitable share of the
resources. This happens mainly when every agent holds similar
significance for the given set of resources and has a desire to
procure it. Thus, it becomes necessary to allocate the resources in an
equitable fashion, i.e., such that each agent believes that its share
is comparable to the share of other agents. Thus, none of the agents
hold preferences over the share of other agents. Hence, we say that
every agent believes that the set of resources is divided fairly among
all the agents. This concept of fairness is termed as \emph{basic
  fairness}.

\begin{definition}
When allocation is perceived to be fair in
comparison to the other agents i.e. share of all the agents is
comparable, \emph{basic fairness} is said to be achieved in resource
allocation.
\end{definition}

This kind of fairness is required in the applications whereby fairness
is the key issue rather than the individual satisfaction of the
self-interested agents. In such applications, it becomes necessary to
divide a resource set in an equitable fashion so that every agent
believes that it is receiving its fair share from the set of
resources. Hence, we see that every agent enjoys material equality and
this ensures basic fairness among them. In other words, the concept of
basic fairness also ensures egalitarian social welfare~\cite{yann05}
and envy-freeness~\cite{brams05}.

An example of such application that pertains to the equitable
allocation of resources is given by Lematre~\cite{mara-survey}. It
deals with the equitable distribution of Earth Observing Satellite
(EOS) Resources. EOS is co-funded and exploited by a number of agents
and its mission is to acquire images of specific areas on earth
surface, in response to observation demands from agents. However, due
to some exploitation constraints and due to large number of demands, a
set of demands, each of which could be satisfied individually, may not
be satisfiable in a single day. Thus, exploitation of EOS should
ensure that each agent gets an equitable share in the EOS resources,
i.e., the demands of each agent is given equal weight assuming that
agents have equal rights over the resource (we assume that they have
funded the satellite equally). Hence, we observe that basic fairness
is achieved as the demands of all agents are entertained by the
equitable distribution of EOS resources.

\subsection{Extended Fairness}

In every MAS, we observe that each agent intends to procure a
resource at a value that is perceived by it to be fair for the
procurement. In other words, every agent assigns a fair value to
each resource that determines its estimate of the value of the
resource in quantitative terms. The fair value attached to each
resource can be expressed in monetary terms in most MAS. Thus, an
agent intends to procure a resource by trading it with cash which is
equal to the fair value attached to the resource by the respective
agent. In such cases, each agent believes that it procures the
resource at a fair value and, hence, believes the allocation to be
fair.

However, it is important to mention that the fair value attached to
each resource by an agent does not necessarily reflect the utility
value of the resource to it.  An agent may hold a higher or lower
utility value for a resource irrespective of the fair value attached
to the resource by it. Thus, the fair value attached to a resource is
an estimate of the actual value of the resource in the system as
perceived by an agent in quantitative terms. It means that an agent is
always willing to trade a resource at its fair value.

The resource procurement in such MAS is perceived to be fair by every
agent. Resources are allocated to the agents based upon different
criteria of optimality in a system. However, it is assured that each
agent that procures a resource perceives the trade to be fair. The
other aspects of allocation like resources procured by other agents,
fair values attached to the resource by other agents, utility value of
the resource to other agents, etc., are not considered while an agent
trades a resource with its fair value. Thus, we see that the kind of
fairness that is achieved in such system is irrespective of other
agents and hence, we term it as \emph{extended fairness}.

\begin{definition}
  When allocation is perceived to be fair by an individual agent
  procuring a resource, and is irrespective of the measures attached
  by other agents, \emph{extended fairness} is said to be achieved in
  resource allocation.
\end{definition}

\noindent
An example of such a system can be explained through a scenario of job
allocations in a multi-national company. Consider a MAS that refers to
a company hiring situation, comprising an agent offering the job
positions (i.e., the owner's agent) and a number of self-interested
agents who contend for these jobs. The contending agents express their
estimate of the fair value through their curriculum vitae that is
submitted to the owner agent, i.e., each contending agent believes
that its curriculum vita fulfills the minimum requirements for the job
and that it is eligible for the job.  Hence, the agents define their
perception of the required qualifications for the job through their
curriculum vitae and believe it to be sufficient to qualify for the
job.  The owner agent selects the job-seeker agent that holds at least
minimum qualifications required for the job but holds the maximum
qualifications among all the contending agents. Thus, the job is
allocated to the agent whose curriculum vita matches this criterion.
Hence, the allocation is perceived to be fair by the winning agent and
by all other agents as it is allocated to the most deserving among all
the agents.  Hence, the job is allocated to the agent on the basis of
its curriculum vita, i.e., an agent acquires a job at its estimate of
the fair value of the qualifications required for the job.

Thus, we see the two broad classification of fairness that explains
different notions of fairness as perceived by the agents in different
MAS. To explain these notions of fairness mathematically, we shall
study a framework where fairness is a required property in resource
allocation. However, we also see that resource allocation deals with
another key issue of optimality in various MAS. Thus, the best example
of resource allocation framework where both optimality and fairness
are the key issues is Combinatorial Auctioning Systems (CAS).

\section{Fairness in Combinatorial Auctioning Systems
  (CAS)} \label{sec_cas}

Combinatorial Auctioning Systems are a kind of MAS which comprise an
auctioneer and a number of self-interested bidders.  The auctioneer
aims at allocating the available resources among the bidders who, in
turn, bid for sets of resources to procure them in order to satisfy
their needs.  The bidders aim at procuring the resources at minimum
value during the bidding process, while the auctioneer aims at
maximizing the revenue generated by the allocation of these resources.
Thus, CAS refers to a scenario where the bidders bid for the set of
resources and the auctioneer allocates the same to the highest-bidding
agent in order to maximize the revenue. Hence, we see that optimality
is one of the key issues in CAS.  The Sandholm algorithm is used here
to attain optimal allocation of resources.  It works by making an
allocation tree and carrying out some preprocessing steps like pruning
to make the steps faster without compromising the
optimality~\cite{narahari05, sandholm02}.

However, besides optimality, another key issue desired by some
auctioning systems is fairness.  To incorporate this significant
property in this resource allocation procedure, we propose an
algorithm which uses a metric to measure fairness for each agent and
determines the final payment made by the winning bidders.

The algorithm that we describe is based upon a CAS that uses the
Sandholm algorithm for achieving optimality, and an
incentive-compatible mechanism called Generalized Vickrey Auction
(GVA) as the pricing mechanism that determines the payments to be
given by the winning bidders.  The Generalized Vickrey Auction (GVA)
has a payoff structure that is designed in a manner such that each
winning agent gets a discount on its actual bid.  This discount is
called a Vickrey Discount, and is defined in~\cite{narahari05} as the
extent by which the total revenue to the seller is increased due to
the presence of that winning bidder, i.e., the marginal contribution
of the winning bidder to the total revenue.

We give mathematical formulations to show that both kinds of fairness
can be achieved in CAS.  We show that \emph{extended fairness} is
achieved in all cases except in case of a tie, in which case
\emph{basic fairness} is ensured.

\subsection{Mathematical Formulation}

\subsubsection{Terminology}

Let our CAS be a multi-agent system which is defined by the following
entities:

\begin{itemize}

\item[(i)]A set $\Phi$ comprising $m$ resources \textit{\(r_0,
    r_1,\ldots, r_{m-1}\)} for which the bids are raised.

\item[(ii)] A set $\xi$ comprising $n$ bidders
  \textit{\(b_0,b_1,\ldots, b_{n-1}\)}.  These are the agents among
  whom the resources are allocated.

\item[(iii)] An auctioneer, denoted by $\lambda$, is the initial owner
  of all the resources and invites bids in the auctions.

\end{itemize}

Let us consider a CAS that comprises three bidders
\textit{\(b_0,b_1,b_2\)}, an auctioneer denoted as $\lambda$, and
three resources \textit{\(r_0, r_1, r_2\)}.  Each bidder is privileged
to bid upon any combination of these resources.  We denote the
combinations or subsets of these resources as \textit{\{\(r_0\)\},
  \{\(r_1\)\}, \{\(r_2\)\}, \{\(r_0, r_1\)\}, \{\(r_0, r_2\)\},
  \{\(r_1, r_2\)\}, \{\(r_0, r_1, r_2\)\}}. We shall use the term
package to define a set that comprises the subsets of resources won by
a bidder. For example, a package for a bidder winning the subsets
\textit{\{\(r_0\)\}} and \textit{\{\(r_1\)\}} is defined as
\textit{\{\{\(r_0\)\},
  \{\(r_1\)\}\}}.

Assume that the auctioneer and each bidder has fair valuation for each
of the individual resource (say, in dollars) as shown in
Table~\ref{table1}.

\begin{definition}
  The fair valuation for an agent represents its estimate of the
  actual value of the resource.
\end{definition}

Thus, fair valuation by a bidder and an auctioneer for each resource
represents their estimate of the actual value of each resource.  Thus,
a bidder is willing to trade a resource at its fair value and also
believes that no loss is incurred by the seller in the trade.
Similarly, the auctioneer is willing to sell a resource at the fair
valuation described for it by him. Fair value for a combination of
resources can be calculated as the sum of the fair value for each of
the resources in that combination. The fair valuation for a resource
by a bidder does not refer to the utility measure of the resource for
the bidder. We shall use the term fair valuation and fair value
interchangeably.

\begin{table*}[!h]
\centering
\begin{tabular}[h]{c|c c c}
\hline
 &\(r_0\)&\(r_1\)&\(r_2\) \\
\hline \hline
\emph{Bidder \(b_0\)}&5&8&8 \\
\emph{Bidder \(b_1\)}&10&2&8 \\
\emph{Bidder \(b_2\)}&10&5&10 \\
\emph{Auctioneer, $\lambda$}&8&10&15\\
\end{tabular}
\caption{Fair valuations for each resource by all bidders}
\label{table1}
\end{table*}

From Table~\ref{table1}, we can see that the bidder \(b_0\) values
resource \(r_0\) for \$5, \(r_1\) for \$8 and \(r_2\) for \$10. This
means that bidder \(b_0\) is willing to trade resource \(r_0\) with
\$5, \(r_1\) with \$8 and \(r_2\) with \$8 and believes that no loss
is incurred by the auctioneer in this trade. The fair valuation for
the subset \{\(r_0, r_2\)\} for the bidder \(b_0\) is calculated as
the sum of his the fair values for \(r_0\) and \(r_2\) i.e. 5 + 8 =
\$13. Similarly, fair valuation for a package is the sum of the fair
valuation of the comprising sets i.e. for a package \{\(r_0\)\},
\{\(r_1, r_2\)\}\}, the
fair value is the sum of the fair values of \{\(r_0\)\} and \{\(r_1, r_2\)\}.

Let the bids raised by the bidders for the individual resource and
different combination of resources be as given in table2. It can be
seen that the bids raised by each of the bidder for different sets of
resources may or may not be equal to the fair valuation of the
respective set of resources. A bidder can put zero bids for the set of
resources it does not wish to procure.

\begin{table*}[!h]
\centering
\begin{tabular}[h]{c|c c c c c c c}
\hline
 &$r_0$&\(r_1\)&\(r_2\)&\{\(r_0,r_1\)\}&\{\(r_0,r_2\)\}&\{\(r_1,r_2\)\}&\{\(r_0,
r_1,r_2\)\} \\
\hline \hline
\emph{Bidder \(b_0\)}&0&10&5&0&20&15&50\\
\emph{Bidder \(b_1\)}&10&5&10&30&0&0&50 \\
\emph{Bidder \(b_2\)}&10&0&15&20&30&0&30 \\
\end{tabular}
\caption{Bids raised by the bidders for different combination of resources} \label{table2}
\end{table*}

It is assumed that the bidding language used in our system is $OR$
bids, i.e., a bidder can submit any number of bids and is willing to
obtain any number of atomic bids for a price equal to the sum of their
prices~\cite{nisan00, narahari05, sandholm02}.  Recall that the set of
all the bids won by a bidder is referred to as a package.

\begin{itemize}

\item[(a)] A set $D$ which is a subset of the set of natural numbers,
  i.e., \(D \subseteq \mathbb{N}\), describing the possible values (in
  dollars) given to resources by bidders.

\item[(b)] A \emph{fairness matrix}, $\Gamma_{i,[1 \times m]}$, for
  the bidder $b_i$, and $\Gamma_{\lambda,[1 \times m]}$ for the
  auctioneer, $\lambda$, is defined as :

  \(\Gamma_i = [\tau_{i,0}, \tau_{i,1}, \ldots, \tau_{i,m-1}]\),   for the bidder $b_i$.\\
  \(\Gamma_\lambda = [\tau_{\lambda,0}, \tau_{\lambda,1}, \ldots, \tau_{\lambda,m-1}]\), for the auctioneer, $\lambda$.\\

  where the function $\tau_i$ is defined by a bidder, $b_i$, for a
  resource, $r_j$ as:
\begin{displaymath}
        \tau_i ( r_j) = d,   d \in D
\end{displaymath}

This function represents a fair valuation of a resource, $r_j$, by a
bidder $b_i$.  From table 1, we have $\tau_0 \left( r_1\right)$ = 8,
$\tau_1 \left( r_1\right)$ = 2, etc. Thus, from table 1, we have the
following fairness matrices: $\Gamma_0$ = [5, 8, 8]; $\Gamma_1$ = [10,
2, 8]; $\Gamma_2$ = [10, 5, 10]; $\Gamma_\lambda$ = [ 8, 10, 15]

\item[(c)] A function $\Upsilon_{i,k}$, known as the \emph{pay
    function} by a bidder, $b_i$ is defined as:
\begin{displaymath}
\Upsilon_{i,k} \left( b_i, \Psi_k\right) = d
\end{displaymath}

where \(Psi_k = \{\mu_j | \mu_j \in \mathrm{set \ of \ resources \ won
  \ by \ bidder} b_i\}\), and $\Upsilon_{i,k}$ is the cost of the
package, $\Psi_k$, to the bidder $b_i$ as calculated from the GVA
payment scheme.

\end{itemize}

\subsubsection{Algorithm To Incorporate Extended Fairness In
  CAS} \label{fairness_algo}
        
\begin{itemize}

\item[(1)] Each bidder and the auctioneer define its fairness matrix
  before the start of bidding process. It is a sealed matrix and is
  unsealed at the end of bidding process.

\item[(2)] An allocation tree is constructed at the end of the bidding
  process to determine the optimum allocation and the winning
  bidders~\cite{sandholm02}.  Information about all the bidders in a
  tie is not discarded using some pre-defined criteria.

\item[(3)] Use GVA pricing mechanism to calculate the Vickrey
  discount~\cite{narahari05} and, hence, payments by the winning
  bidders for their corresponding packages, i.e., calculate
  $\Upsilon_{ij}$ for the package $\Psi_j$ won by the bidder $b_i$.

\item[(4)] Calculate the fair value of the package won by each bidder
  and denote it as $\Pi_{ij}$ for the bidder $b_i$ who wins the
  package $\Psi_j$.

\item[(5)] Also calculate the fair value of each package using the
  fairness matrix of the auctioneer and denote it as $\Pi_{\lambda j}$
  for a package $\Psi_j$.

\item[(6)] Compare the values of $\Pi_{\lambda j}$ and $\Upsilon_{ij}$
  and determine the final payment by the bidder depending upon the
  following conditions:

\end{itemize}

Case 1: $\Upsilon_{ij} > \Pi_{\lambda j}$ Bidder pays the amount
$\Upsilon_{ij}$ and the auctioneer gains profit equal to
($\Upsilon_{ij} - \Pi_{\lambda j}$) which is distributed among other
bidders who bid for the package $\Psi_j$. The profit is distributed in
a proportional manner, i.e., in the ratio of $(\Pi_{kj} - \Pi_{\lambda
  j}) / (\Pi_{\lambda j})$ for a bidder $b_k$ who also bid for
$\Psi_j$ but is not a winning bidder.

Case 2: $\Upsilon_{ij} = \Pi_{\lambda j}$ In this case, the bidder
pays the amount $\Upsilon_{ij}$ to the auctioneer.

Case 3: $\Upsilon_{ij} < \Pi_{\lambda j}$ Auctioneer suffers a loss of
amount ($\Pi_{\lambda j} - \Upsilon_{ij}$).  However, loss can be
recovered as per the following cases:

\begin{itemize}

\item[(i)] $\Pi_{ij} > \Pi_{\lambda j}$ Bidder's estimate of fair
  valuation is more than $\Upsilon_{ij}$.  Thus, bidder gives the
  final payment of $\Pi_{\lambda j}$ to the auctioneer.

\item[(ii)] $\Pi_{ij} = Pi_{\lambda j}$ Bidder's estimate of fair
  value is same as that of auctioneer's estimate and is greater than
  the value $\Upsilon_{ij}$.  Thus, bidder pays amount $\Pi_{ij}$ to
  the auctioneer.

\item[(iii)] $\Pi_{ij} < \Pi_{\lambda j}$

\begin{itemize}

\item[(a)] $\Pi_{ij} \le \Upsilon_{ij}$ : then bidder's final payment
  remains the same, i.e., $\Upsilon_{ij}$

\item[(b)] $\Pi_{ij} > \Upsilon_{ij}$ : then bidder's final payment is
  equal to $\Pi_{ij}$.

\end{itemize}

\end{itemize}

\subsubsection{Handling the cases of tie - Incorporating Basic
Fairness}

Unlike traditional algorithms, we do not discard the bids in the cases
of a tie on the basis of some pre-decided criterion.  We consider
these cases in our algorithm to provide \emph{basic fairness} to
the bidders.

In cases of a tie, we shall measure the utility value of the
resource to each bidder in the tie.

\begin{definition}
  The utility value of a resource to a bidder is defined as the
  quantified measure of satisfaction or happiness derived by the
  procurement of the resource.
\end{definition}

Mathematically, we define utility value for a resource set $\mu_j$ as:

\[\upsilon_i(\mu_j ) = \nu_i(\mu_j) - \Pi_{ij}\]

where \(\nu_i(\mu_j)\) is the bid value of the resource \(\mu_j\) and
\(\Pi_{ij}\) is the fair valuation for the resource set $\mu_j$ for
the bidder $b_i$.

The bidders maximize this utility value to quantify the importance and
their need for the resource to them.  Thus, the higher the utility
value, the greater is the need for the resource set.

In such a case, fairness can be imparted if the resource set $\mu_j$
is divided among all the bidders in a proportional manner, i.e., in
accordance to the utility value attached to the resource by each
bidder.

Let us consider the same example to explain the concept of basic
fairness in our system. From table 2, we observe that the optimum
allocation attained through allocation tree comprises the resource
set $\{r_0, r_1, r_2\}$ as it generates the maximum revenue of \$50.
However, we see that this bid is raised by the two bidders, $b_0$
and $b_1$.

Thus, we calculate the fair value of the resource set
$\mu_1 = \{r_0, r_1, r_2\}$ for the bidder $b_0$ and $b_1$, i.e.,
$\Pi_{01}$ = 5+8+8 = \$21 and $\Pi_{11}$ = 10+2+8 = \$20. Thus, the
utility value of the resource set $\mu_0$ for the bidder $b_0$ and
$b_1$ is as follows:
        
\begin{itemize}

\item[] for bidder $b_0$, $\upsilon_0(\mu_1 )$ = 50 - 21 = \$29, and
\item[] for bidder $b_1$, $\upsilon_1(\mu_1 )$ = 50 - 20 = \$30.

\end{itemize}

Hence, the resource set $\mu_1$ is divided among bidders, $b_0$ and
$b_1$, in the ratio of 29:30. In other words, bidder $b_0$ gets
49.15\% and bidder b1 gets 50.85\% of the resource set $\mu_1$.

The payment made by the bidders is also done in the similar
proportional manner. For example, the bidders, $b_0$ and $b_1$, make
their respective payments in the ratio of 29:30 to make up a total of
\$50 for the auctioneer, i.e., bidder $b_0$ pays \$24.65 and bidder
$b_1$ pays \$25.35 to the auctioneer for their respective shares.

Hence, we see that extended fairness as well as basic fairness are
achieved in CAS by using a fairness metric.  We take into account the
fair estimates of the auctioneer and the bidders for each resource to
ensure that fairness is achieved to auctioneer as well as the bidders.

We shall do a detailed analysis of the new mechanism in the following
section.

\section{Analysis} \label{sec_analysis}

A detailed analysis is done to highlight some important concepts used
and the significant properties exhibited by our CAS through our
payment mechanism.

\subsection{Fairness}

In MAS, every agent has its own metric to measure fairness with
regards to the allocation of resources. In CAS, we see that the
auctioneer and the bidders have their own estimate of the fairness
value attached to each resource. We introduced the concept of
fairness matrix to attain the knowledge of the fair value attached
to each resource by the auctioneer and each bidder. This matrix is
used as a metric to ensure that each allocation of resources is
perceived to be a fair allocation by the bidder as well as the
auctioneer.

Thus, we say that extended fairness is achieved when a bidder procures
a resource for an amount that is equal to its estimate of fair value
of that resource. In such a case, the bidder believes that the
resource was procured by it at a fair amount irrespective of other
bidders' estimate of fair value of that resource. Thus, the allocation
is believed to be extendedly fair as per the
estimates of the winning bidder.

We also see that basic fairness is achieved in our system when
there is more than one bidder who has raised equal bid for the same
set of resources. In such a case, we divide the set of resources among
all the bidders so as to ensure fairness to all the bidders in a
tie. However, this division of resources set is done in a proportional
manner. We intend to divide the resource such that the bidder holding
highest utility value to it should get the biggest share. To ensure
this, we calculate the utility value (i.e., $\upsilon_i(\mu_j ) =
\nu_i(\mu_j) - \Pi_{ij}$) of the set of resources to each bidder and
divide the set in the ratio of these values among the respective
bidders. Thus, we see that each bidder procures its basic share of
the set of resources in accordance to the basic importance
attached by the bidder to the set of resources.

Due to the achievement of fairness through our payment scheme, the
bidders are expected to show willingness to participate in the
auctions.

\subsection{Rationality}

We shall see that the fairness matrix is a metric for fair valuation
that forces the bidders and the auctioneer to behave rationally.  In
other words, they attain maximum profits if they describe their fair
matrix truthfully. Our system ensures certain behavioral traits of
auctioneer and the bidders through which this property of rationality
is achieved in our system. These behavioral traits are described in
the following:

\begin{proposition}
  The auctioneer does not state extremely high or low values in its
  fairness matrix as this does not generate higher revenue.
\end{proposition}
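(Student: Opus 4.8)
The plan is to isolate the single channel through which the auctioneer's fairness matrix $\Gamma_\lambda$ enters the mechanism, namely the package fair value $\Pi_{\lambda j}$, and to show that the auctioneer's realized revenue is a nondecreasing but \emph{saturating} function of $\Pi_{\lambda j}$. First I would record that the allocation produced by the Sandholm step and the GVA payments $\Upsilon_{ij}$ computed in step~(3) depend only on the submitted bids, not on $\Gamma_\lambda$; likewise each winner's own package valuation $\Pi_{ij}$ is fixed by $\Gamma_i$. Hence, for a fixed winning bidder $b_i$ and its package $\Psi_j$, I may treat $\Upsilon_{ij}$ and $\Pi_{ij}$ as constants while $\Pi_{\lambda j}$ (a sum of entries of $\Gamma_\lambda$, ranging over sums of elements of $D$) is the auctioneer's lever. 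The tie-handling rule of the preceding subsection routes the full winning bid to the auctioneer irrespective of $\Gamma_\lambda$, so it does not open any additional channel and can be set aside.

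Next I would read off the six branches of step~(6) to get a closed form for the auctioneer's receipts $R_j$ from $\Psi_j$. In Cases~1 and~2 the auctioneer nets $\Pi_{\lambda j}$ (in Case~1 the surplus $\Upsilon_{ij}-\Pi_{\lambda j}$ is handed back to the losing bidders, hence is not revenue); in Case~3(i)--(ii) it again nets $\Pi_{\lambda j}$; and in Case~3(iii) it nets $\max(\Upsilon_{ij},\Pi_{ij})$, which does not involve $\Pi_{\lambda j}$ at all. Checking the boundaries between branches, I would conclude
\[
R_j(\Pi_{\lambda j}) \;=\; \min\bigl(\Pi_{\lambda j},\ \max(\Upsilon_{ij},\Pi_{ij})\bigr),
\]
so that $R_j$ increases with $\Pi_{\lambda j}$ only up to the threshold $\max(\Upsilon_{ij},\Pi_{ij})$ and is flat thereafter.

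The proposition then follows at once. The auctioneer's total revenue is $\sum_j R_j(\Pi_{\lambda j})$ over the winning packages, and each summand behaves as above. Declaring extremely low fair values makes the corresponding $\Pi_{\lambda j}$ small and pins $R_j$ to that small number, strictly below a moderate declaration; declaring extremely high values pushes every $\Pi_{\lambda j}$ past its saturation threshold and so yields nothing beyond what the moderate declaration $\Pi_{\lambda j}=\max(\Upsilon_{ij},\Pi_{ij})$ already gives --- and that threshold lies precisely in the range an honest estimate of the actual value occupies. Hence neither extreme generates higher revenue, which is the claim.

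I expect the main obstacle to be the bookkeeping of Case~1: one must justify that the redistributed surplus is genuinely deducted from the auctioneer's take, since if it were not, $R_j$ would equal $\Upsilon_{ij}$ there and the monotonicity picture would change; and one must confirm that the proportional tie split sends the same total ($=$ the winning bid) to the auctioneer regardless of $\Gamma_\lambda$. Both points are routine once the mechanism of Section~\ref{sec_cas} is read carefully, but a careless reading of either would derail the argument.
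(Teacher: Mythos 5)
Your proposal is correct and follows essentially the same route as the paper's own (much more informal) proof: a branch-by-branch reading of step~(6) showing that a low declaration caps the auctioneer's net receipt at $\Pi_{\lambda j}$ because the Case~1 surplus $\Upsilon_{ij}-\Pi_{\lambda j}$ is given away to the losing bidders, while a high declaration forces Case~3, where the receipt is capped by the bidder's own values. Your closed form $R_j=\min\bigl(\Pi_{\lambda j},\max(\Upsilon_{ij},\Pi_{ij})\bigr)$ checks out against all six branches and is a sharper packaging of what the paper only gestures at with ``Case 3 follows most of the times.''
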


\begin{proof}
  If an auctioneer states very high values in its fairness matrix,
  then Case 3 follows most of the times.  From Case 3, we observe that
  the auctioneer receives a payment equal to $\Pi_{\lambda j}$ only if
  this value is comparable to that of $\Pi_{ij}$ for a bidder
  $b_i$. In other words, an auctioneer benefits only if its valuation
  is not irrationally higher than that of the bidder. On contrary, the
  auctioneer does not state very low values in its fairness matrix.
  For such circumstances, Case 1 follows, whereby it seems to be that
  the auctioneer gains profit and, hence, it is distributed among the
  bidders.\end{proof}

\begin{proposition}
  Bidders do not state extremely high or low values in the fairness
  matrix as it does not help them procure the resources at lower
  values.
\end{proposition}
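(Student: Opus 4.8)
The plan is to mirror the proof of the preceding proposition and exploit the fact that a bidder's reported fairness matrix $\Gamma_i$ is, by the ordering of the steps in Section~\ref{fairness_algo}, essentially inert. First I would record the structural observation: $\Gamma_i$ plays no role in Step~(2), where the Sandholm algorithm determines the winning package $\Psi_j$ from the bids of Table~\ref{table2} alone, nor in Step~(3), where the Vickrey discount and hence $\Upsilon_{ij}$ are computed from those same bids. Consequently the package a bidder wins and its Vickrey price $\Upsilon_{ij}$ are fixed before $\Gamma_i$ is ever consulted, and $\Gamma_i$ enters the outcome only through $\Pi_{ij}=\sum_{r_k\in\Psi_j}\tau_{i,k}$ in Step~(6) — and there it is relevant only in Case~3, i.e.\ when $\Upsilon_{ij}<\Pi_{\lambda j}$.

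Next I would dispatch the two extremes in the generic (no-tie) situation by case analysis on Step~(6). If the bidder reports \emph{extremely low} values, then $\Pi_{ij}<\Pi_{\lambda j}$ and in fact $\Pi_{ij}\le\Upsilon_{ij}$, so sub-case (iii)(a) of Case~3 applies and the final payment is exactly $\Upsilon_{ij}$ — precisely the Vickrey price it would also pay in Cases~1 and~2, and the same it pays under any moderate (truthful) report with $\Pi_{ij}\le\Upsilon_{ij}$. Thus understating buys nothing: it never brings the payment below $\Upsilon_{ij}$. If instead the bidder reports \emph{extremely high} values, then $\Pi_{ij}>\Pi_{\lambda j}$, sub-case (i) (or (ii)) of Case~3 applies, and the final payment is $\Pi_{\lambda j}$, which, since we are in Case~3, strictly exceeds $\Upsilon_{ij}$; overstating therefore strictly raises what the winner pays. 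Hence no extreme report lets a winning bidder procure its package more cheaply than a truthful one does.

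Finally I would treat the tie branch, where basic fairness divides $\mu_j$ — and the total price owed to the auctioneer — among the tied bidders in proportion to the utility values $\upsilon_i(\mu_j)=\nu_i(\mu_j)-\Pi_{ij}$. Here lowering $\Pi_{ij}$ inflates $\upsilon_i(\mu_j)$, which enlarges the bidder's share of $\mu_j$ but enlarges its share of the total payment by exactly the same factor, so the price per unit of resource is unchanged; raising $\Pi_{ij}$ symmetrically shrinks the share for no price benefit. The step I expect to require the most care is making this proportionality argument airtight: one must verify that the target total the auctioneer collects in a tie (the common winning bid value) does not itself depend on the tied bidders' reported matrices, so that ``fraction of $\mu_j$ received'' and ``fraction of the total paid'' genuinely scale together and cancel. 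Granting that, manipulating $\Gamma_i$ is useless in every branch, which is exactly the assertion that bidders gain nothing from extreme fairness values, and so a rational bidder reports its true fair valuations.
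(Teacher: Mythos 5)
Your proof follows the same basic route as the paper's --- a case analysis of Step~(6), Case~3, showing that overstated fair values push the payment up to $\Pi_{\lambda j}$ or $\Pi_{ij}$ (both above $\Upsilon_{ij}$) while understated values leave the payment pinned at $\Upsilon_{ij}$ via sub-case (iii)(a) --- but you execute it differently in two places, and in both you are the more careful party. First, the paper asserts that understating causes ``a loss \ldots under Case~3, condition~(iii), part~(a),'' whereas you correctly observe that (iii)(a) charges exactly the ordinary Vickrey price $\Upsilon_{ij}$, so the right conclusion is ``no gain'' rather than ``a loss''; your framing, reinforced by the preliminary observation that $\Gamma_i$ is inert in Steps~(2)--(3) and so cannot influence which package is won or its GVA price, is the logically tighter version of what the proposition actually claims (extreme values ``do not help''). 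Second, on ties the paper argues only the high side (a large $\Pi_{ij}$ drives $\upsilon_i(\mu_j)=\nu_i(\mu_j)-\Pi_{ij}$ negative, costing the bidder its share), while you give a symmetric proportionality argument: since both the fraction of $\mu_j$ received and the fraction of the common winning bid paid are the same ratio of utility values, the per-unit price is invariant under manipulation of $\Pi_{ij}$. That cancellation is genuinely not in the paper and is worth having; the one caveat is that it shows understating cannot lower the \emph{price}, but it does let a bidder enlarge its \emph{share} of $\mu_j$ at that unchanged per-unit rate, which a bidder who values the goods above that rate would still find attractive --- so your tie branch establishes the proposition as literally stated (no procurement at lower values) but, like the paper, stops short of full strategy-proofness in the tie case.
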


\begin{proof}

  We see that the Case 3 deals with the fairness values of the bidder
  $b_i$. In case $\Upsilon_{ij} < \Pi_{\lambda j}$ and $\Pi_{\lambda
    j} < \Pi_{ij}$, the bidder pays the amount $\Pi_{\lambda
    j}$. Otherwise if $\Upsilon_{ij} \le \Pi_{ij} \le \Pi_{\lambda
    j}$, the bidder pays the amount equal to $\Pi_{ij}$. In both the
  cases, we see that the value to be paid is higher than the bid
  value. However, if the bidder is in a tie for a resource set, then
  its utility value falls negative if $\Upsilon_{ij} \le
  \Pi_{ij}$. Hence, the bidder does not get the profits which are
  distributed among other bidders in a tie.  Thus, a bidder undergoes
  a loss if the value of $\Pi_{ij}$ is very high. On contrary, the
  bidder does not state lower values in the fairness matrix. In this
  case, a loss is perceived by the bidder
  under Case 3, condition (iii), part (a).\end{proof}

\begin{proposition}
Bidders raise their bids truthfully.
\end{proposition}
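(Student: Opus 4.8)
The plan is to argue, in the style of the two preceding propositions, that a bidder's dominant strategy under this payment mechanism is to report its true bid, by checking that no unilateral deviation from a truthful bid $\nu_i$ can strictly improve the bidder's final outcome (the good it wins, weighted by its share in a tie, minus its final payment as determined by Cases 1--3). First I would fix a bidder $b_i$ and a resource set $\Psi_j$, hold all other bidders' bids and all fairness matrices fixed, and split the analysis into the same two regimes the paper uses: the ``clean win'' regime where $b_i$ is the unique optimal allocation winner, and the ``tie'' regime where $b_i$'s bid on $\Psi_j$ ties with one or more others. In the clean-win regime I would observe that the GVA/Vickrey-discount computation of $\Upsilon_{ij}$ and the subsequent Case-1/2/3 adjustment depend on $b_i$'s own bid only through whether $b_i$ still wins; since the fairness-matrix quantities $\Pi_{ij}$, $\Pi_{\lambda j}$ are independent of the bid, raising the bid above $\nu_i$ cannot lower the payment and risks nothing, while lowering it risks losing the package (and the positive utility $\nu_i(\mu_j)-\Pi_{ij}$ the bidder would otherwise keep, cf. the tie discussion where this quantity is exactly the utility $\upsilon_i$). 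This is the standard incentive-compatibility argument for VCG, and I would invoke the incentive-compatibility of GVA cited in the paper (\cite{narahari05}) rather than re-derive it.

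Next I would handle the tie regime, which is where this mechanism differs from plain GVA and hence where the real work is. Here the paper's tie rule splits $\mu_j$ in proportion to the utilities $\upsilon_i(\mu_j)=\nu_i(\mu_j)-\Pi_{ij}$ and splits the payment in the same ratio. I would write $b_i$'s payoff as its share times its per-unit surplus and check monotonicity in $\nu_i(\mu_j)$: increasing the reported bid increases $b_i$'s share of $\mu_j$ but, because payments are split in the same ratio and the per-unit fair cost $\Pi_{ij}$ is unchanged, the bidder's net gain moves in the direction of its \emph{true} utility sign. A bidder with genuinely positive utility for $\mu_j$ therefore has no incentive to shade down (doing so shrinks its share of a beneficial allocation), and a bidder cannot inflate the bid profitably because the matched payment share rises with it; combined with the already-quoted fact (Proposition for bidders on fairness matrices) that $\Pi_{ij}$ should be reported truthfully, the only remaining free parameter, the bid $\nu_i$, is pinned down to the truthful value. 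I would phrase this as: any deviation either leaves $b_i$ in the same case with a weakly worse payment, or moves it out of the winning/tied set and forfeits a nonnegative utility, so truthful bidding weakly dominates.

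The step I expect to be the main obstacle is making the tie-regime argument genuinely airtight rather than merely plausible, because the proportional profit redistribution in Case~1 couples a non-winning bidder's realized outcome to \emph{its own} reported bid and fairness values (the profit $\Upsilon_{ij}-\Pi_{\lambda j}$ is split in the ratio $(\Pi_{kj}-\Pi_{\lambda j})/\Pi_{\lambda j}$), so a losing bidder is not indifferent to its bid the way it would be under textbook VCG; I would need to verify that a losing bidder cannot manufacture a profitable tie or a larger redistribution share by misreporting, and that a winning bidder cannot collapse itself into a favorable tie. I anticipate this is where one must either impose a mild regularity assumption (e.g.\ that bids take values in the discrete set $D$ so exact ties are knife-edge, or that the redistribution is understood as a tie-break-independent transfer) or else restrict the claim to ``truthful bidding is a best response when fairness matrices are truthful and no other bidder is misreporting,'' matching the informal, case-driven level of rigor of the surrounding propositions. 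Given the paper's expository tone, I would present the clean-win case in full and then argue the tie case by the monotonicity observation above, flagging the redistribution coupling as the subtle point and resolving it under the discreteness of $D$.
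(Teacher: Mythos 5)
Your two-case skeleton---unique winner handled by appeal to GVA incentive compatibility, ties handled by analyzing the proportional split---is exactly the decomposition the paper uses, but the paper's entire proof is three sentences: it asserts (i) that truthful bidding maximizes the Vickrey discount, and (ii) that in a tie, ``for a given value of $\Pi_{ij}$, profit $\upsilon_i(\mu_j) = \nu_i(\mu_j) - \Pi_{ij}$ can be maximized by raising the bids truthfully.'' Everything you add beyond those two assertions---the monotonicity analysis of shares versus payments in the tie regime, and especially the observation that the Case~1 redistribution in the ratio $(\Pi_{kj}-\Pi_{\lambda j})/\Pi_{\lambda j}$ couples a \emph{losing} bidder's transfer to its own reports---is work the paper simply does not attempt. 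Your instinct that the tie regime is where the real difficulty lies is sound, and your own analysis in fact exposes that the paper's step (ii) is problematic: $\nu_i(\mu_j) - \Pi_{ij}$ is strictly increasing in the reported bid, so ``maximizing profit'' in that expression would push a bidder to overbid rather than bid truthfully, and within an exact tie all tied bidders share the same $\nu_i(\mu_j)$ by definition, so the split is governed by the fairness matrices and cannot be manipulated through the bid without breaking the tie---a point your ``knife-edge'' caveat gestures at but the paper ignores entirely. So your proposal is not a different route; it is an honest accounting of what the paper's route would need in order to be a proof, and the residual gaps you flag (manufactured ties, redistribution-share manipulation) are gaps in the paper's argument as well, not defects unique to your write-up. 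If you present this, delegate the clean-win case to the cited incentive compatibility of GVA as you propose, but be explicit that the tie case requires an added assumption of the kind you name, since neither your monotonicity observation nor the paper's one-line claim closes it.
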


\begin{proof}
  Bidders gain by bidding truthfully. On bidding truthfully, they can
  maximize the Vickrey Discount on their bids. Secondly, in the cases
  of tie, they can maximize the profit earned ($\upsilon_i (\mu_j) =
  \nu_i (\mu_j) - \Pi_{ij}$), i.e., for a given value of $\Pi_{ij}$,
  profit can be maximized by raising the bids truthfully.\end{proof}

\subsection{Incentive Compatibility}

The payment mechanism described in our system is incentive compatible
in certain cases. In the cases, when payment value for a package, as
calculated from the VCG mechanism, is greater than the fair valuation
of the auctioneer for the same package, then Case 1 follows, i.e., the
auctioneer gets an amount higher than its fair valuation for that
package. It means that the auctioneer gains the profit equal to
($\Upsilon_{ij} - \Pi_{\lambda j}$). This profit is distributed among
the bidders who bid for the same package in the proportional manner as
explained in Case 1.

Thus, it also forces the bidders to bid truthfully so as to gain
maximum benefits from the auctioning system.

\subsection{Efficiency}

The cases of a tie are handled in such a way so as to ensure basic
fairness.  In such a case, we divide the resource in proportion to its
utility value to a bidder. Thus, a resource is allocated in accordance
to the wishes of the consumers and, hence, the net benefit attained
through its use is maximized. In other words, we can say that our
system is allocatively efficient as the resources are allocated to the
bidders who value them most and can derive maximum benefits through
their use. Hence, we achieve allocative efficiency by handling the
cases of tie in an efficient manner.

\subsection{Optimality}
    
Optimality is a significant property that is desired in a CAS.  We
ensure this property by the use of Sandholm algorithm in our
system. It is used to obtain the optimum allocation of resources so as
to maximize the revenue generated for the auctioneer.  Thus, output
obtained is the most optimal output and there is no other allocation
that generates more revenues than the current allocation.

\section{Conclusion} \label{sec_conclusion}

Thus, we have shown that fairness is incorporated in CAS, whereby all
the agents receive their fair share if they behave rationally.
Extended fairness as well as basic fairness is attained through our
payment mechanism. Optimal allocation is obtained through the Sandholm
algorithm and the other significant properties like allocative
efficiency and incentive compatibility are also achieved.  This is an
improvement because in the existing world of multi-agent systems,
there do not seem to be many studies that attempt to incorporate
optimality as well as fairness.  The present paper addresses this lack
in a specific multi-agent system, namely, the CAS.

However, this work can be extended towards achieving a generalized
framework suitable for all, or at least many, multi-agent systems,
rather than just CAS.

The framework described can also be extended in several ways: one is
to de-centralize the suggested algorithm, to avoid use of a single
dedicated auctioneer.  Especially in distributed computing
environments, it would be best for there to be a method to implement
the suggested algorithm (or something close to it) without requiring
an agent to act as a dedicated auctioneer.

A second important extension would be to find applications for the
work.  Some applications that suggest themselves include distribution
of land (a matter of great concern for governments and people the
world over) in a fair manner.  In land auctions where a tie occurs, no
pre-defined or idiosyncratic method need be used to break the tie;
rather, the allocation can be done fairly in the manner suggested.

Fairness is also an important and pressing concern in the computing
sciences and information technology, particularly, in distributed
computing~\cite{lamport2000}.  It is therefore also of interest to see
how our method for achieving fairness could be applied in such
contexts.

\bibliographystyle{siam}
\bibliography{fairness}

\end{document}